\documentclass[preprint,12pt]{elsarticle}

\usepackage{amsmath,amssymb,amsthm}

\newtheorem{theorem}{Theorem}
\newtheorem{lemma}[theorem]{Lemma}

\newtheorem{definition}[theorem]{Definition}

\newcommand{\Q}{\mathbb{Q}}

\newcommand{\Z}{\mathbb{Z}}

\journal{Information Processing Letters}

\begin{document}

\begin{frontmatter}

\title{NP-hardness of $p$-adic linear regression}

\author{Greg Baker}
\ead{greg.baker@anu.edu.au}
\affiliation{organization={Australian National University},
            city={Canberra},
            country={Australia}}

          \begin{abstract}
            $p$-adic linear regression is the problem of finding coefficients
            $\beta$ that minimise
  $\sum_i |y_i - x_i^\top\beta|_p$. We prove that computing an optimal
  solution is NP-hard via a
  polynomial-time reduction from Max Cut using a regularisation
  gadget. 
\end{abstract}

\begin{keyword}
$p$-adic analysis \sep linear regression \sep computational complexity \sep NP-hardness \sep Max Cut
\end{keyword}

\end{frontmatter}

\section{Introduction}

The $p$-adic numbers, introduced by Hensel in 1897, provide an alternative
completion of the rationals with respect to a non-Archimedean absolute
value. While their applications in number theory and algebraic geometry
are classical, recent work has explored their use in machine learning,
where the ultrametric structure of $p$-adic spaces offers natural
representations for hierarchical and tree-structured data
\cite{murtagh2004,bradley2008,khrennikov2013,martins2025padic,zubarev2025padic,aaclpadiclinear}.

A fundamental task in this setting is \emph{$p$-adic linear regression}:
given data points $(x_i, y_i)$ with $x_i \in \Q^n$ or $\Z^n$ and $y_i \in \Q$ or $\Z$,
find coefficients $\beta \in \Q^n$ minimising
\[
L(\beta) = \sum_{i=1}^{m} |y_i - x_i^\top \beta|_p,
\]
where $|\cdot|_p$ denotes the $p$-adic absolute value.

To our knowledge, there are no known examples of $p$-adic linear
regression arising as an oracle (black-box) subroutine problem.  The
real-world applications of $p$-adic methods we are aware of have
instead been posed directly as optimisation problems---for example,
grammar morphology \cite{aaclpadiclinear} and ontology-related
optimisation \cite{baker2025padic}.  Accordingly, we formulate our
complexity result for the optimisation version of $p$-adic linear
regression.

This also highlights a striking contrast with classical regression: in
Euclidean settings, least-squares regression admits a closed-form
solution and $\ell_1$ regression can be solved in polynomial time via
convex optimisation.  Here, changing only the loss to the ultrametric
$|\cdot|_p$ yields an NP-hard optimisation problem.

This problem differs fundamentally from classical least-squares regression.
The ultrametric inequality $|a + b|_p \leq \max(|a|_p, |b|_p)$ creates
discrete, hierarchical structure in the loss landscape. Prior work
\cite{baker2025padic} established that optimal solutions pass through at least
$n+1$ data points under non-degeneracy conditions, yielding polynomial-time
algorithms for fixed dimension.

In this paper, we complete the complexity picture by
proving that $p$-adic linear regression is NP-hard when the
dimension is part of the input (Section~\ref{sec:hardness}).

\section{Preliminaries}

\subsection{The $p$-adic absolute value}\label{padicdef}

For a prime $p$ and a non-zero rational $x = p^k \cdot (a/b)$ where
$\gcd(ab, p) = 1$, the \emph{$p$-adic valuation} is $v_p(x) = k$, and
the \emph{$p$-adic absolute value} is $|x|_p = p^{-v_p(x)}$. By
convention, $|0|_p = 0$.

The $p$-adic absolute value satisfies the \emph{ultrametric inequality}:
\[
|x + y|_p \leq \max(|x|_p, |y|_p),
\]
with equality when $|x|_p \neq |y|_p$. This non-Archimedean property
distinguishes $p$-adic analysis from real analysis and is central to
our results.

\subsection{Problem definitions}

\begin{definition}[$p$-adic linear regression]
\leavevmode\\
Given data points $(x_1, y_1), \ldots, (x_m, y_m)$ with $x_i \in \Z^n$
(or $\Q^n$) and $y_i \in \Z$ (or $\Q$), and a prime $p$, find $\beta \in
\Q^n$ minimising
\[
\sum_{i=1}^{m} |y_i - x_i^\top \beta|_p.
\]
\end{definition}

\begin{definition}[Max Cut]
Given an unweighted graph $G = (V, E)$, find a partition of $V$ into
sets $S$ and $T$ that maximises the number of edges with one endpoint
in $S$ and the other in $T$.
\end{definition}

Max Cut is NP-hard \cite{karp1972,garey1979}. It is convenient to talk
of ``colouring'' the vertices one of two colours and to describe edges
as monochromatic or bichromatic. The vertices of one colour are put
into one set and those of the other colour into the other set.

\section{NP-hardness of $p$-adic regression}
\label{sec:hardness}

We reduce Max Cut to $2$-adic linear regression. 

\begin{theorem}\label{thm:main}
$p$-adic linear regression is NP-hard (already for $p=2$).
\end{theorem}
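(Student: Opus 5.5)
We reduce Max Cut on $G=(V,E)$, $V=\{1,\dots,n\}$, to $2$-adic regression with unknowns $\beta\in\Q^n$, one coordinate per vertex. The two colours are encoded by $\beta_v=0$ and $\beta_v=1$. The construction has two parts: a \emph{regularisation gadget} that forces every optimal $\beta$ to be $\{0,1\}$-valued, and \emph{edge gadgets} whose total loss separates bichromatic from monochromatic edges.

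\textbf{Edge gadgets.} For each edge $uv$ we add the data point $x=e_u+e_v$, $y=1$, so the residual is $1-\beta_u-\beta_v$. When $\beta_u,\beta_v\in\{0,1\}$ this residual is $0$ for a bichromatic edge and $\pm1$ (absolute value $1$) for a monochromatic one. The edge contribution is therefore exactly the number of uncut edges, and minimising it is the same as maximising the cut.

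\textbf{Regularisation gadget.} For each $v$ we add $M$ copies of the point $(e_v,0)$ and $M$ copies of $(e_v,1)$, with $M$ polynomial in $|E|$, say $M=|E|+1$. The vertex contribution is $M(|\beta_v|_2+|1-\beta_v|_2)$. By the ultrametric inequality, $\max(|\beta_v|_2,|1-\beta_v|_2)\ge |1|_2=1$. The sum equals $1$ exactly when one term is $0$, that is when $\beta_v\in\{0,1\}$. Every other $\beta_v$ gives a sum strictly greater than $1$:
\begin{itemize}
\item if $v_2(\beta_v)>0$, then $|1-\beta_v|_2=1$ and $|\beta_v|_2=2^{-v_2(\beta_v)}\ge 2^{-v_2(\beta_v)}$;
\item if $v_2(\beta_v)=0$ with $\beta_v\ne1$, then $|\beta_v|_2=1$ and $|1-\beta_v|_2>0$;
\item if $v_2(\beta_v)<0$, both terms equal $|\beta_v|_2\ge2$.
\end{itemize}

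\textbf{Main obstacle.} The difficulty is the first case: the excess $2^{-k}$ can be arbitrarily small, so a large $M$ alone does not rule out non-binary optima. I would handle this by an exchange argument rather than a uniform penalty gap. Given any $\beta$, define a rounding $\beta'$ by replacing each $\beta_v$ with the element of $\{0,1\}$ to which it is $2$-adically closest: $0$ if $v_2(\beta_v)>0$, and $1$ if $v_2(1-\beta_v)>0$. If $\beta_v$ is a $2$-adic unit with $1-\beta_v$ also a unit, or if $v_2(\beta_v)<0$, the choice is arbitrary, and the vertex penalty already exceeds $1$ by at least $1$ (since $|1-\beta_v|_2\ge1$ and $|\beta_v|_2\ge1$ in the first situation, and the sum is $\ge4$ in the second).

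Next we compare losses. Write $\beta_v=b_v+\delta_v$. Rounding the vertex $v$ lowers its gadget loss by exactly $M|\delta_v|_2$. Each edge residual changes from $1-b_u-b_v-\delta_u-\delta_v$ to $1-b_u-b_v$, and by the ultrametric inequality its absolute value rises by at most $\max(|\delta_u|_2,|\delta_v|_2)$. Summed over all edges, the increase is at most $\deg$-weighted $|\delta|$ terms, hence at most $\sum_v \deg(v)|\delta_v|_2\le |E|\sum_v|\delta_v|_2$. For the arbitrary-choice vertices, use $|\delta_v|_2\le|\beta_v|_2+1$ together with the fact that the penalty excess is at least of the order of $|\delta_v|_2$; this is a routine check.

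Taking $M>|E|$ (say $M=2|E|+2$ to absorb constants), rounding never increases the loss and strictly decreases it unless $\beta$ is already binary. So every optimal $\beta$ is binary, and its loss is $nM+(\#\text{uncut edges})$. An optimal regression solution thus yields a maximum cut, and the instance has size polynomial in $|V|+|E|$. The same construction works for any prime $p$, which establishes Theorem~\ref{thm:main}.
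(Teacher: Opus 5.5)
Your proof is correct and uses exactly the paper's instance. Where you differ is in how you show that optima are binary.

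\textbf{The paper's route.} It works in two stages.
\begin{itemize}
\item It compares with $\beta=0$, whose loss is $Mn+m<M(n+1)$. This rules out any coordinate with $|\beta_k|_2>1$, since that coordinate alone costs at least $2M$.
\item For $\beta\in\Z_2^n$ it rounds $\beta_j\mapsto\beta_j \bmod 2$ and checks by parity that no individual term increases. If $s_u=s_v$ then $\beta_u+\beta_v-1$ is odd, so the edge term is $1$ before and after. If $s_u\neq s_v$ the rounded residual is $0$.
\end{itemize}
So in the paper the edge terms never go up under rounding, and $M$ is needed only for the global bound.

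\textbf{Your route.} Your exchange argument handles all of $\Q^n$ in one step and is quantitative. The gadget loss at $v$ drops by at least $M|\delta_v|$. Each edge term rises by at most $\max(|\delta_u|,|\delta_v|)$ by the ultrametric inequality. Taking $M>\max_v\deg(v)$ pays for the difference.

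\textbf{Comparison.} The paper's argument is shorter but rests on parity, so it is specific to $p=2$. For odd $p$, reduction mod $p$ does not land in $\{0,1\}$; for example, with $p=3$ and $\beta_u=\beta_v=2$ the edge residual is $3$, which is not a unit. Your argument never uses parity, so it gives hardness for every fixed prime $p$, as you note.

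\textbf{Two small tidy-ups.}
\begin{itemize}
\item For $p=2$ your ``both units'' case is empty, since every $2$-adic unit $\beta_v$ has $v_2(1-\beta_v)\ge 1$.
\item Your ``routine check'' is exact rather than up to constants. If $v_p(\beta_v)<0$, then $|\delta_v|_p=|\beta_v|_p$ and the excess is $2|\beta_v|_p-1\ge|\delta_v|_p$. In the both-units case, $|\delta_v|_p=1$, which equals the excess. So the gadget decrease is at least $M|\delta_v|_p$ in every case, and $M=|E|+1$ already suffices; the doubling to $2|E|+2$ is unnecessary.
\end{itemize}
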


\begin{proof}
Let $G = (V, E)$ be an unweighted graph with $|V| = n$ vertices and
$|E| = m$ edges. We construct a regression instance with $n$
coefficients $\beta_1, \ldots, \beta_n$ (one per vertex).

\textbf{Edge points.} For each edge $(u, v) \in E$, add one point
$(e_u + e_v, 1)$ where $e_u$ is the $u$-th standard basis vector.
These contribute to the residual $\left| 1 - \beta_u - \beta_v \right|_2$.

Note that if $\beta_u = \beta_v = 1$ then the residual $\left| -1
\right|_2 = 1$; likewise if $\beta_u = \beta_v = 0$ then the residual
$\left| 1 \right|_2 = 1$.  Thus, if $\beta_u$ and $\beta_v \in
\{0,1\}$ the residual counts the number of monochromatic edges.

Unfortunately, nothing constrains $\beta_u$ to $\{0,1\}$ yet.
For this we add regularisation.

Set $M := m+1$. We create the following data points:

\textbf{Forcing points.} For each vertex $j \in \{1, \ldots, n\}$, add
$M$ copies of $(e_j, 0)$ and $M$ copies of $(e_j, 1)$.
These contribute residuals $|\beta_j|_2$
and $|\beta_j - 1|_2$ respectively.

This is the $p$-adic equivalent of regularisation in a Euclidean
linear regression problem: including a multiple of the coefficient's values
in the loss.

The total loss for our model is then:
\[
L(\beta) = M \sum_{j=1}^{n} \bigl(|\beta_j|_2 + |\beta_j - 1|_2\bigr)
+ \sum_{(u,v) \in E} |\beta_u + \beta_v - 1|_2.
\]

We will demonstrate that the $n$ optimal coefficients ($\beta$) that
minimise this loss correspond to the colourings for the $n$
vertices in the Max Cut problem. To do this we need to show
that $\beta_j \in \{0,1\}$ for all $j$ (so that they correspond to valid
colourings) and that a minimal loss is also a maximal solution to Max Cut.

We establish four lemmas.

\begin{lemma}\label{lem:toobig}
  Any solution to the $p$-adic linear regression problem specified above
  which leads to a loss equal to or greater than $M (n + 1)$ cannot
  be optimal.
\end{lemma}


\begin{proof}
Let us start with a naive solution: $\beta^{0}$ (where $\beta_j = 0$ for all $j$), then we can calculate

\[
L(\beta^{0}) = M \sum_{j=1}^n \left( \left|0\right|_2 + \left| -1 \right|_2 \right) + \sum_{(u,v) \in E} \left| -1 \right|_2 = M n + m
\]

Therefore, any solution which leads to a loss greater than $M n + m$ cannot
be the optimal solution. Since $m < M$ we can further say that any
solution with a loss equal to or greater than $M (n + 1)$ cannot be the optimal solution.
\end{proof}

\begin{lemma}\label{lem:binary}
For any $a \in \Q$, we have $|a|_2 + |a-1|_2 \geq 1$, with equality
if and only if $a \in \{0, 1\}$.
\end{lemma}

\begin{proof}
Since $|1|_2 = |a - (a-1)|_2 \leq \max(|a|_2, |a-1|_2)$, at least one
of the terms is $\geq 1$. For the sum to equal 1, the other term must
be 0, forcing $a = 0$ or $a = 1$.
\end{proof}

\begin{lemma}
  If for some $\beta, k$ we have
  $\left| \beta_k \right|_2 > 1$ or
  $\left| \beta_k - 1 \right|_2 > 1$, then
  $\beta$ is not an optimal solution.
\end{lemma}

\begin{proof}
  Reviewing the formula for $| \cdot |_2$, it is impossible to get a negative
  measure; there is also no $a$
  that can exist for which $1 < | a |_2 < 2$. Thus
  $\left| \beta_k \right|_2 \geq 2$.

  Substituting these inequalities and using Lemma~\ref{lem:binary}, we
  have
  \[
  \begin{split}
  L(\beta)
  &\ge M \sum_{j\ne k} \bigl(|\beta_j|_2 + |\beta_j - 1|_2\bigr)
      + M\bigl(|\beta_k|_2 + |\beta_k - 1|_2\bigr) \\
  &\ge M(n-1) + 2M \;=\; M(n+1),
  \end{split}
  \]
  where we also use that the edge-term sum is nonnegative.

  Thus by Lemma \ref{lem:toobig}, this would not be an optimal solution.
\end{proof}

Since $\left| \beta_j \right|_2 \leq 1$ for all $j$ in an optimal solution
we know that the denominators of $\beta_j$ in reduced form are not divisible
by 2.

Thus, by Lemma 4.3.2 in Gouv\^{e}a~\cite{Gouvea2020padic}, we can represent
any $\beta_j$ as a sum of increasing powers of 2, where each $a_t \in \{0,1\}$.

\[
\beta_j = a_0 + 2 a_1 + 4 a_2 + \ldots
\]

Note that 2-adically, the later terms are getting steadily smaller. If
we need to use an infinite number of terms to represent $\beta_j$ (as
we would if it is a non-integer), the limits are well-defined and we
can approximate it by cutting off after a finite number of terms.

We now define $\bmod 2$ as taking the first term ($a_0$) in that
series. This is exactly the ordinary meaning of $\bmod 2$ for
integers, but is also well-defined for non-integers.

\begin{lemma}\label{lem:rounding}
For any $\beta \in \Z_2^n$, let $s_j = \beta_j \bmod 2 \in \{0,1\}$.
Then $L(s) \leq L(\beta)$.
\end{lemma}

\begin{proof}
The forcing terms satisfy $|s_j|_2 + |s_j - 1|_2 = 1 \leq |\beta_j|_2 + |\beta_j - 1|_2$
by Lemma~\ref{lem:binary}.

For edge $(u,v)$: if $s_u = s_v$, then $\beta_u + \beta_v - 1$ is odd,
so $|\beta_u + \beta_v - 1|_2 = 1 = |s_u + s_v - 1|_2$. If $s_u \neq s_v$,
then $\beta_u + \beta_v - 1 \equiv 0 \pmod{2}$, so
$|\beta_u + \beta_v - 1|_2 \leq 1/2$, while $|s_u + s_v - 1|_2 = |0|_2 = 0$.
\end{proof}

By Lemma~\ref{lem:rounding}, there exists an optimal solution in
$\{0,1\}^n$. For $s \in \{0,1\}^n$, an edge $(u,v)$ contributes:
\begin{itemize}
\item 0 to the loss if $s_u \neq s_v$ (edge crosses the cut), since
$s_u + s_v = 1$.
\item 1 to the loss if $s_u = s_v$ (edge does not cross), since
$s_u + s_v \in \{0, 2\}$ and $|{\pm}1|_2 = 1$.
\end{itemize}

Thus
\[
L(s) = M n + (\text{number of non-crossing edges}) = M n + m - \text{cutsize}(s).
\]

Since the term $Mn+m$ does not depend on $s$, minimising $L(s)$ is
equivalent to maximising $\text{cutsize}(s)$. In particular, from an
optimal regression value $L^\star$ we can recover the maximum cut size
as $\text{cutsize}^\star = Mn + m - L^\star$. Equivalently, for any
integer $K$, $G$ has a cut of size at least $K$ if and only if the
constructed regression instance admits $\beta$ with $L(\beta)\le
Mn+m-K$. Thus any Max Cut problem can be turned into a $p$-adic linear
regression problem, and if $p$-adic linear regression can be solved in
polynomial time, then $P=NP$.
\end{proof}

NP-hardness is a worst-case classification: it does not preclude fast
algorithms on structured instances, and it motivates approximation
algorithms and the identification of tractable regimes.

\section{Discussion: tractable regimes and open problems}

Theorem~\ref{thm:main} explains the worst-case complexity when the
dimension is part of the input. It complements known polynomial-time
methods for fixed dimension \cite{baker2025padic}, and suggests several
natural parameters that affect computational behaviour.

\subsection{Fixed dimension}\label{fixed-dimension}

When the dimension $n$ is fixed, Baker \emph{et al.}\ show that (under a
non-degeneracy condition) an optimal solution passes through at least
$n+1$ data points \cite{baker2025padic}. Enumerating all such candidate
hyperplanes yields an $O(m^{n+1})$ algorithm.

\subsection{Dependence on $p$ and the dataset}

The dependence on $p$ enters through divisibility of residuals. For an
integer residual $r$, we have $|r|_p=0$ if $r=0$ and $|r|_p=1$ whenever
$p\nmid r$. Thus, if the data are integral and a candidate fit produces
integer residuals whose (ordinary) magnitudes are all $<p$, then every
nonzero residual contributes $1$ and the loss simply counts the number
of nonzero residuals. In this regime, minimising $p$-adic loss is
equivalent to \emph{exact fitting}: maximising the number of data
points interpolated exactly. In two dimensions this is the maximum
collinear points problem; near-quadratic algorithms are known
\cite{chen2022fitting} where the algorithm of \cite{baker2025padic} would be
cubic in the number of data points.

\subsection{Alternative aggregations}

Martins~\cite{martins2025padic} studies $p$-adic learning with a $\max$
aggregation rather than a sum. The complexity of $p$-adic regression
under this and other aggregations is currently open.

\subsection{What happens with decreasing regularisation?}

In Section \ref{sec:hardness} we needed the regularising terms in order
to bind $\beta \in \{0,1\}^n$. The reduction uses $M=m+1$ copies of each
forcing point, but without any forcing points the constructed instance
is trivial: setting $\beta_j=\frac{1}{2}$ for all $j$ makes every edge
residual zero. Understanding how the computational complexity changes
as the strength of this regularisation decreases is an open problem.

\section{Conclusion}

We have shown that $p$-adic linear regression is NP-hard when the
dimension is part of the input, completing its complexity
classification alongside the known polynomial-time algorithm for fixed
dimension. The reduction from Max Cut leverages the ultrametric
inequality in a fundamental way: the regularisation terms constrain
$\beta$ to binary values precisely because of the non-Archimedean
property.

Open questions include approximation algorithms for $p$-adic linear
regression, how complexity varies with the choice of $p$ and the
regularisation strength, and the complexity under alternative loss
aggregations (max vs.\ sum).

\end{document}